\theoremstyle{definition} 
\newtheorem{definition}{Definition}
\newtheorem{example}{Example}
\newtheorem{theorem}{Theorem}
\newtheorem{lemma}{Lemma}
\DeclareMathOperator{\gram}{::=}
\newcommand{\arros}[1]{\xrightarrow[]{#1}_d}
\newcommand{\arrob}[1]{\xrightarrow[]{#1}_s}
\newcommand{\Arro}[1]{\xLongrightarrow[]{#1}}
\newcommand{\rn}[1]{({\sc #1})}
\def \mathaxiom #1#2{\begin{array}{l}%
    {\mbox{\scriptsize ({\sc #1})} }%
    \\ \iaxiom{#2}%
    \end{array}}
\newcommand{\iaxiom}[1]{\textstyle\rule[-1.3ex]{0cm}{3ex}#1}
\def \mathrule #1#2#3{\begin{array}{l}
    {\mbox{\scriptsize ({\sc #1})} }
    \\ \bigfract{#2}{#3}
\end{array}}
\newcommand{\bigfract}[2]{\frac{^{\textstyle #1}}{_{\textstyle #2}}}
\newcommand{\saga}[1]{\{\![ #1 ]\!\}}
\newcommand{\sagarun}[2]{\{\![ #1 , #2 ]\!\}}
\newcommand{\prot}[1]{\llbracket #1 \rrbracket_{\commit}}
\newcommand{\killed}[1]{\llbracket #1 \rrbracket_{\abort}}
\newcommand{\protany}[1]{\llbracket #1 \rrbracket_{\dontcare}}
\newcommand{\extr}[1]{extr( #1 )}
\newcommand{\nullp}[1]{null( #1 )}
\newcommand{\abort}{\boxtimes}
\newcommand{\commit}{\ensuremath{\boxdot}}
\newcommand{\dontcare}{\square}
\newcommand{\crash}{\boxast}
\newcommand{\fabort}{\overline{\abort}}
\newcommand{\fcrash}{\overline{\crash}}
\newcommand{\fabcrash}{\overline{\boxplus}}
\newcommand{\nil}{\mathbf{0}}
\DeclareMathOperator{\lin}{lin}
\title{Static vs Dynamic SAGAs\thanks{Research partially supported by the Project FP7-231620 {\sc HATS}.}}
\author{Ivan Lanese
\institute{Focus Team, University of Bologna/INRIA\\ Bologna, Italy}
\email{lanese@cs.unibo.it}}
\begin{document}
\maketitle

\begin{abstract}
SAGAs calculi (or simply SAGAs) have been proposed by Bruni et al. as
a model for long-running transactions. The approach therein can be
considered static, while a dynamic approach has been proposed by
Lanese and Zavattaro. In this paper we first extend both static SAGAs
(in the centralized interruption policy) and dynamic SAGAs to deal
with nesting, then we compare the two approaches.
\end{abstract}

\section{Introduction}
Computing systems are becoming more and more complex, composed by a
huge number of components interacting in different ways. Also,
interactions are frequently loosely-coupled, in the sense that each
component has scarce information on its communication partners, which may
be unreliable (e.g., they may disconnect, or may not follow the expected
protocol). Communication may be unreliable too, for
instance in the case of wireless networks. Nevertheless, applications
are expected to provide reliable services to their users. For these
reasons, a main concern is the management of unexpected events.

In the case of loosely-coupled distributed systems (e.g., for web
services), unexpected events are managed according to the
\emph{long-running transaction} approach. A long-running transaction
is a computation that either \emph{commits} (i.e., succeeds), or it
\emph{aborts} and is \emph{compensated}. Compensating a (long-running)
transaction means executing a sequence of actions that revert the
effect of the actions that lead to abortion, so as to reach a
consistent state. This is a relaxation of the properties of ACID
transactions from database theory~\cite{HR:ACID}, based on the fact
that in the systems we are interested in rollback cannot always be
perfect (e.g., one can not undo the sending of an e-mail, and if one
tries to undo an airplane reservation, (s)he may have to pay some
penalty).

Recently, many proposals of formal models to reason about properties
of long-running transactions, and about systems exploiting them, have
been put forward. We concentrate on process calculi, since they are a
good tool to experiment with different primitives and compare their
relative merits and drawbacks. Later on, the results of these
experiments can drive the design of real languages. Process calculi
approaches to long-running transactions divide in two main categories:
\emph{interaction-based calculi} and \emph{flow composition
  approaches}.  Interaction-based calculi are obtained by extending
name passing calculi with dedicated primitives, and one of their main
concerns is the interplay between communication and transactions. We
recall among them the $\pi$t-calculus~\cite{BLZ:PT},
c-join~\cite{BMM:CJOIN}, web$\pi$~\cite{LZ:WEBPI},
dc$\pi$~\cite{VFR:DCP}, the ATc calculus~\cite{BT:ATc} and
SOCK~\cite{ACSD}.  Flow composition approaches instead deal with the
composition of atomic activities, studying how to derive compensations
for complex activities from compensations of basic ones.  We recall
for instance SAGAs~\cite{GGKKS:SAGAS}, StAC~\cite{BF:STAC},
cCSP~\cite{BHF:CCSP} and the SAGAs calculi~\cite{BMM:SAGASPOPL}.  Some
of the primitives for long-running transactions have been introduced
in real languages such as WS-BPEL~\cite{WSBPEL} and
Jolie~\cite{ECOWS:JOLIEFAULT}. Long-running transactions have also
been analyzed in a choreographic setting in~\cite{CHY:EXC}. However,
only a few works~\cite{BBF+:SAGASCONCUR,SEFM09,ESOP2010} until now
have tried to clarify the relationships between the different
approaches. We want to go forward in the understanding of those
relationships.

As shown in~\cite{ESOP2010}, a main distinction to be done is between
\emph{static compensations}, where essentially the order of execution
of compensations depends on the syntactic structure of the program,
and \emph{dynamic compensations}, where it depends on the order of
execution of the activities at runtime and cannot be determined
statically. The analysis in~\cite{ESOP2010} has been carried on in the
case of interaction-based calculi. However compensations are also
heavily studied in the framework of flow composition languages. The
only dynamic approach to compensations in the framework of flow
composition languages we are aware of is the one of dynamic SAGAs
\cite{SEFM09}. There however only non-nested SAGAs have been defined,
and they have been contrasted with the dynamic interaction-based
calculus SOCK~\cite{ACSD}, but not with the classic static SAGAs
calculi~\cite{BMM:SAGASPOPL}. Here we want to carry on this last
comparison. More precisely, since different flavors of static SAGAs
calculi exist, we contrast them with the {\em centralized
  interruption} policy defined in~\cite{BBF+:SAGASCONCUR}. Here
``interruption'' means that, if there are many concurrent flows of
computation and one of them aborts, the other ones are stopped.
``Centralized'' instead means that the compensation procedure of
parallel flows of computation is started only after all forward flows
have been stopped.  We have chosen this approach since it is the one
that more closely matches the behavior of systems programmed in
WS-BPEL or Jolie. Actually, also this flavor of SAGAs has been defined
only in the non-nested case, while we are interested in the nested
case too. In fact, nested SAGAs are fundamental to model complex flows
of activities.  Thus the contributions of this paper are:
\begin{itemize}
\item a description of the semantics of nested SAGAs, both under the
  static centralized interruption approach (Section~\ref{sec:static})
  and the dynamic approach (Section~\ref{sec:dynamic}); both the
  extensions are non trivial, as we will show when we present their
  semantics;
\item a comparison between the two semantics (Section~\ref{sec:comp}),
  showing that the computations allowed by the dynamic semantics are a
  strict subset of the ones allowed by the static semantics, and this
  is due to the fact that the dynamic semantics is more strict on the
  possible orders of execution of compensations of parallel
  activities; this comparison has also been used as a sanity check for
  the formalization of the two semantics.
\end{itemize}

\section{Static SAGAs}\label{sec:static}
SAGAs calculi~\cite{BMM:SAGASPOPL} (SAGAs from now on) are calculi for
defining compensating processes. A process is composed of
\emph{activities}, ranged over by $A,B,\dots$, and each activity may
have its own compensating activity. Processes form long-running
transactions, called \emph{sagas} in this context. A saga either
\emph{commits} (i.e., succeeds), or it \emph{aborts} and is
\emph{compensated}.  Abortion of the compensation causes a
\emph{failure}, which is recognized as a catastrophic event and
terminates the whole process. We are interested in nested sagas, thus
sagas are processes too.

\begin{definition}[Sagas]
Saga processes are defined by the following grammar:\\
\noindent
$P \gram \nil \ | \ A \ | \ A \div B \ | \ P;P \ | \ P|P \ | \ \saga{P}$
\end{definition}

Processes can be the empty activity $\nil$, an activity $A$ without a
specified compensation, or an activity $A \div B$ specifying $B$ as
compensation for $A$. Processes can be composed in sequence ($P;P$) or
in parallel ($P|P$). Sagas can be nested, thus a saga $\saga{P}$ is a
process too.  In the following we will disregard activities $A$, since
they can be considered as a particular instance of compensable activities $A
\div B$ where $B=0$.

\begin{table*}[p]
\[
\begin{array}{l@{\hspace{.1cm}}l}
  \mathaxiom{zero}
	    {\Gamma \vdash \langle 0, \beta \rangle \arrob{0} \langle \commit, \beta \rangle} 
& \mathaxiom{s-act} 
	    {A\mapsto\commit,\Gamma\vdash
	      \langle A\div B, \beta \rangle \arrob{A} \langle \commit, {B;\beta} \rangle} 
\\[15pt]
  \mathaxiom{f-act} 
	    {A\mapsto\abort,\Gamma\vdash
	      \langle A\div B, \beta \rangle \arrob{0} \langle \abort, {\beta} \rangle} 
&
  \mathrule{s-step}
	   {\Gamma \vdash \langle P, \beta \rangle \arrob{\alpha} \langle \commit, \beta'' \rangle 
	    \quad
	    \Gamma \vdash \langle Q, \beta'' \rangle \arrob{\alpha'} \langle \dontcare, \beta' \rangle} 
	   {\Gamma \vdash \langle {P;Q}, \beta \rangle \arrob{\alpha;\alpha'} \langle \dontcare, \beta' \rangle} 
\\[15pt]
  \mathrule{a-step} 
	   {\Gamma \vdash \langle P, \beta \rangle \arrob{\alpha} \langle \dontcare, \beta' \rangle \quad \dontcare \neq \commit} 
	   {\Gamma \vdash \langle {P;Q}, \beta \rangle \arrob{\alpha} \langle \dontcare, \beta' \rangle}
\\[15pt]
\multicolumn{2}{l}{
  \mathrule{s-par}
	   {\Gamma \vdash \langle P, 0 \rangle \arrob{\alpha_1} \langle \dontcare_1, \beta_1 \rangle 
	    \quad
	    \Gamma \vdash \langle Q, 0 \rangle \arrob{\alpha_2} \langle \dontcare_2, \beta_2 \rangle \quad \dontcare_1,\dontcare_2 \in \{\commit,\abort,\fabort\}} 
	   {\Gamma \vdash \langle {P|Q}, \beta \rangle \arrob{\alpha_1|\alpha_2} \langle \dontcare_1 \land \dontcare_2, (\beta_1|\beta_2);\beta \rangle}	   
}
\\
\multicolumn{2}{l}{
  \mathrule{f-par}
	   {\Gamma \vdash \langle P, 0 \rangle \arrob{\alpha_1} \langle \dontcare_1, \beta_1 \rangle 
	    \quad
	    \Gamma \vdash \langle Q, 0 \rangle \arrob{\alpha_2} \langle \dontcare_2, \beta_2 \rangle \quad \dontcare_2 \in \{\crash,\fcrash,\fabcrash\}} 
	   {\Gamma \vdash \langle {P|Q}, \beta \rangle \arrob{\alpha_1|\alpha_2} \langle \dontcare_1 \land \dontcare_2, 0\rangle}	   
}
\\[15pt]
  \mathaxiom{forced-abt}
            {\Gamma \vdash \langle P, \beta \rangle \arrob{0} \langle \fabort, \beta \rangle}
&
  \mathaxiom{forced-fail}
	    {\Gamma \vdash \langle P, \beta \rangle \arrob{0} \langle \fcrash, 0 \rangle}
\\[15pt] 
 \mathrule{sub-cmt} 
	   {\Gamma\vdash \langle P, 0 \rangle \arrob{\alpha} \langle \commit, \beta' \rangle} 
	   {\Gamma\vdash
	     \langle \saga{P}, \beta\rangle \arrob{\alpha} \langle \commit,{\beta';\beta} \rangle} 
&
  \mathrule{sub-abt}
     { \Gamma \vdash \langle P, 0 \rangle \arrob{\alpha} \langle \abort, \beta' \rangle \quad \Gamma \vdash \langle \beta', 0 \rangle \arrob {\beta'} \langle \commit, 0 \rangle}
	   {\Gamma\vdash
	     \langle \saga{P}, \beta \rangle \arrob{\alpha;\beta'} \langle \commit, \beta \rangle} 
\\[15pt]  
  \mathrule{sub-fail-1} 
     { \Gamma \vdash \langle P, 0 \rangle \arrob{\alpha} \langle \dontcare, 0 \rangle \quad \dontcare \in \{\crash, \fcrash, \fabcrash\}}
	   {\Gamma\vdash
	     \langle \saga{P}, \beta \rangle \arrob{\alpha} \langle \dontcare , 0 \rangle} 
&
  \mathrule{sub-fail-2} 
     { \Gamma \vdash \langle P, 0 \rangle \arrob{\alpha} \langle \abort, \beta' \rangle \quad \Gamma \vdash \langle \beta', 0 \rangle \arrob{\beta''} \langle \abort, 0 \rangle}
	   {\Gamma\vdash
	     \langle \saga{P}, \beta \rangle \arrob{\alpha;\beta''} \langle \crash , 0 \rangle} 
\\[15pt]
\multicolumn{2}{l}{
\mathrule{sub-forced-1} 
     { \Gamma \vdash \langle P, 0 \rangle \arrob{\alpha} \langle \abort, \beta' \rangle \quad \Gamma \vdash \langle \beta', 0 \rangle \arrob{\beta''} \langle \fcrash, 0 \rangle}
	   {\Gamma\vdash
	     \langle \saga{P}, \beta \rangle \arrob{\alpha;\beta''} \langle \fcrash , 0 \rangle} 
}
\\[15pt]
\multicolumn{2}{l}{
\mathrule{sub-forced-2}
	   {\Gamma\vdash \langle P, 0 \rangle \arrob{\alpha} \langle \fabort, \beta' \rangle
	     \quad
	     \Gamma\vdash \langle \beta', 0 \rangle \arrob{\beta''} \langle \dontcare_1, 0 \rangle
	   }
	   {\Gamma\vdash
	     \langle \saga{P}, \beta \rangle \arrob{\alpha;\beta''} \langle \dontcare_2, 0 \rangle} 
       \dontcare_2 = \Big\{
         \begin{array}{ll}
	   \fabort & \mbox{if } \dontcare_1 = \commit \\
	   \fabcrash & \mbox{if } \dontcare_1 \in \{\abort,\crash\}
	 \end{array}
}
\end{array}
\]
\caption{Static semantics of nested SAGAs.}
\protect\label{table:staticsem}
\end{table*}

The idea underlying the static SAGA semantics is that compensations of
sequential activities are executed in reverse order, while
compensations of parallel activities are executed in parallel. In
particular, the possible orders of execution for compensation
activities can be determined statically, looking at the structure of
the process. However, as shown by~\cite{BBF+:SAGASCONCUR}, different
design choices concerning the behavior of parallel activities are
possible. As already said, we consider the semantics with
\emph{interruption of parallel activities and centralized
  compensations} proposed in~\cite{BBF+:SAGASCONCUR}. According to
this semantics parallel activities are stopped when one of them
aborts, while in the semantics without interruption they are run to
the end (and then compensated). Also, compensations are handled in a
centralized way (but for subtransactions), while in the semantics with
distributed compensations each flow is responsible for executing its
own compensations. The semantics presented in~\cite{BBF+:SAGASCONCUR}
however does not consider nested sagas, while we consider them
important both from a theoretical and a practical point of view. From
a practical point of view, nesting is fundamental to model large
systems in a compositional way. From a theoretical point of view
nesting raises interesting questions on the interplay between the
behavior of a saga and of its subsagas. For this reason we extend the
semantics to deal with nested SAGAs, taking inspiration from the one
in \cite{BMM:SAGASPOPL}, which has however distributed compensations.

\begin{definition}[Static semantics of SAGAs]
The static semantics $\arrob{}$ of SAGAs is the LTS defined in
Table~\ref{table:staticsem} (we assume symmetric
rules for parallel composition).
\end{definition}

A saga may commit, abort or fail, denoted respectively by $\commit$,
$\abort$ and $\crash$. Also, a saga may acknowledge an external
abortion or failure, and these two possibilities are denoted by
$\fabort$ and $\fcrash$ respectively.  Finally, a saga may answer an
external abortion with a failure (when an external abort causes it to
compensate a subsaga, and the compensation fails), denoted as
$\fabcrash$. Note that this situation never occurs without nesting. In
fact, under the centralized compensation approach, only subsagas are
compensated locally, while other processes are compensated at the top
level.

The behavior of a saga is determined by the behavior of its
constituent activities, which is specified by an environment $\Gamma$
mapping each activity to either $\commit$ or $\abort$. The semantics
of SAGAs is given as a relation $\Gamma \vdash \langle P, \beta
\rangle \arrob{\alpha} \langle \dontcare, \beta' \rangle$, defined in
the big-step style. Here label $\alpha$ is the observation, showing the
successfully executed activities. Observations are obtained by
combining activities in sequence and in parallel. If one considers an
interleaving setting, label $A|B$ can be considered as a shortcut for the
two possible sequences, $A;B$ and $B;A$.  We consider observations up
to the following axioms: $0;\alpha=\alpha$, $\alpha;0=\alpha$,
$0|\alpha=\alpha$, $\alpha|0=\alpha$.  Also, $\beta$ is the
compensation stored for execution at the beginning of the computation
and $\beta'$ the final stored compensation. Finally, $\dontcare$
ranges over $\{\commit, \abort, \crash, \fabort, \fcrash,
\fabcrash\}$, the possible outcomes of the saga.

The first three rules execute the empty activity and basic
activities. Note that rule \rn{f-act} does not execute the
compensation (differently from the rules for distributed compensations
in the literature), since this will be executed in a centralized way
(see rule \rn{sub-abt}).  Rule \rn{s-step} deals with sequential
composition when the first part of the computation succeeds. Rule
\rn{a-step} deals with all the other cases. Rules \rn{s-par} and
\rn{f-par} concern parallel composition. The operator $\land$ in these
rules is the symmetric closure of the one defined in the table below:
\[
\begin{array}{l@{\hspace{.5cm}}|@{\hspace{.5cm}}l@{\hspace{1cm}}l@{\hspace{1cm}}l@{\hspace{1cm}}l@{\hspace{1cm}}l@{\hspace{1cm}}l}
\quad \land       & \commit   & \abort & \fabort &\crash & \fcrash & \fabcrash\\ 
\hline
\quad\commit   & \commit   \\
\quad\abort    & \abort    & -      & \\
\quad\fabort   & \fabort   & \abort & \fabort & \\
\quad\crash    & \crash    & -      & -       & -      & \\
\quad\fcrash   & \fcrash   & -      & -       & \crash & \fcrash   \\
\quad\fabcrash & \fabcrash & \crash & -       & -      & \fabcrash & - \\
\end{array}
\]
The two rules differ since in the first case the compensation is
stored, in the second one it is discarded (since a failure is
involved). Rules \rn{forced-abt} and \rn{forced-fail} show that a
process can be stopped either by an external abort or by an external
failure. In the second case the compensation is discarded. Rule
\rn{sub-cmt} allows a saga to commit upon commit of its internal
computation. Rule \rn{sub-abt} instead allows a saga to commit upon
abort of the internal computation and successful compensation. Rule
\rn{sub-fail-1} propagates to a saga a catastrophic outcome of its
internal computation. Rule \rn{sub-fail-2} establishes failure for an
aborted saga whose compensation aborts too. Rule \rn{sub-forced-1}
allows an external failure to interrupt a compensating saga. Finally,
rule \rn{sub-forced-2} deals with external requests of abortion for
sagas. The saga is interrupted and compensated. If the compensation is
not successful then a failure is propagated to the outer level. Note
that it is not possible to make a saga abort while it is executing its
own compensation: the execution of a compensation is protected from
further aborts. Compensations of sagas are executed locally, in
parallel with the normal flow of external activities, and before
starting the compensations of external sagas containing them.

We show now a few derivable transitions to clarify the semantics.

\begin{example}\label{ex:shipstatic}
Consider a ship for transporting goods. Assume that two different
kinds of goods, $A$ and $B$, have to be loaded, and the order is not
relevant. Also, $A$ is not necessary, while $B$ is. After loading the
ship can leave. This can be modeled using a SAGA process $P=(\saga{loadA
\div unloadA} | loadB \div unloadB);leave$. Assume that all the
activities but $leave$ succeed.  We can derive as a big-step:\\
\noindent
$\Gamma \vdash \langle P,0 \rangle \arrob{loadA|loadB} \langle \abort, unloadA|unloadB \rangle$\\
thus the process has aborted.
If we put the whole process inside a saga then compensation is actually executed and the saga succeeds:\\
\noindent
$\Gamma \vdash \langle \saga{P},0 \rangle \arrob{(loadA|loadB);(unloadA|unloadB)} \langle \commit, 0 \rangle$

Assume now that all the activities but $loadB$ and $unloadA$
succeed. If the failure of $loadB$ occurs before the execution of
$loadA$ then we have the transition:\\
\noindent
$\Gamma \vdash \langle P,0 \rangle \arrob{0} \langle \abort, 0 \rangle$\\
derived using rule \rn{s-par} where the left component performs
$\fabort$ (derived using rule \rn{sub-forced-2} with left premise
$\Gamma \vdash \langle loadA \div unloadA, 0 \rangle \arrob{0} \langle \fabort, 0
\rangle$) and the right one performs $\abort$.

If the failure of $loadB$ occurs after the execution of
$loadA$ we have instead the transition:\\
\noindent
$\Gamma \vdash \langle P,0 \rangle \arrob{loadA} \langle \crash, 0 \rangle$\\
derived using rule \rn{s-par} where the left component performs
$\fabcrash$ (derived using rule \rn{sub-forced-2} with left premise
$\Gamma \vdash \langle loadA \div unloadA, 0 \rangle \arrob{loadA}
\langle \fabort, unloadA \rangle$) and the right one performs
$\abort$.

Having $\fabort$ or $\fcrash$ instead of $\fabcrash$ (which is novel
of this semantics) would not faithfully model the intuition. In fact,
in the first case the result of the transition would be $\abort$
instead of $\crash$, while in the second case the transition would not
be derivable at all, since $\abort \land \fcrash$ is undefined
(otherwise an abort could make a transaction fail, even if
compensations were successful).
\end{example}

\begin{example}\label{ex:subtransstatic}
We consider here a modification of the example above, so to clarify
another aspect of the semantics. We consider a SAGA process
$P'=\saga{loadA1 \div unloadA1;loadA2 \div unloadA2} | (loadB1 \div
unloadB1;loadB2 \div unloadB2)$, where each load activity has been
split in two subactivities.  Assume that activity $loadB2$ aborts,
while all the other activities succeed.  On the right-hand side we
have a transition:\\
\noindent 
$\Gamma \vdash \langle loadB1 \div unloadB1; loadB2
\div unloadB2, 0 \rangle \arrob{loadB1} \langle \abort, unloadB1
\rangle$\\ 
This interacts with a left transition of the form:\\ 
\noindent
$\Gamma
\vdash \langle \saga{loadA1 \div unloadA1;loadA2 \div unloadA2},0 \rangle \arrob{loadA1;unloadA1} \langle 0, \fabort
\rangle$\\ 
Thus the label of the whole transition is
$(loadA1;unloadA1)|loadB1$. In particular, the compensation of the
left branch, $unloadA1$, can be executed before $loadB1$, i.e.\ before
the fault actually happens. This can be justified by considering an
asynchronous scenario, where the observer receives events from
different parallel processes out-of-order. The same problem occurs with the
distributed semantics~\cite{BBF+:SAGASCONCUR}, and is due to the fact
that sagas are compensated locally. We will see that the dynamic
semantics solves this problem. An approach for solving the problem
also in the static scenario can be found in~\cite{BKLS:COORDINATED}.
\end{example}


\section{Dynamic SAGAs}\label{sec:dynamic}
Dynamic SAGAs have been proposed in \cite{SEFM09}, in the non-nested
case, to reduce the degree of nondeterminism in saga execution. In
static SAGAs, in fact, the compensation of $A_1\%B_1|A_2\%B_2$ is
$B_1|B_2$. Thus both the orders $B_1;B_2$ and $B_2;B_1$ are allowed,
independently of the order in which activities $A_1$ and $A_2$ have been
executed. Dynamic SAGAs specify instead that compensations of parallel
activities are executed in reverse order: if the execution of normal
activities has been $A_1;A_2$ then the execution of compensations is
$B_2;B_1$. Thus the order of execution of compensations depends on
runtime information on the execution of the basic activities.  While
the semantics of static SAGAs is normally given in the big-step style,
the semantics of dynamic SAGAs is given in the small-step one. A more
detailed motivation for this will be given in the next section.

Static and dynamic SAGAs have the same syntax, differing only in the
semantics. However, to define the semantics of dynamic SAGAs, we find it
convenient to exploit an extended syntax:\\

\noindent
$P \gram \dots \ | \ \sagarun{P}{P} \ | \ \prot{P} \ | \ \killed{P}$\\

Here $\sagarun{P}{\beta}$ is a running saga, where $P$ is the body and
$\beta$ a stored compensation (syntactically, a process obtained as
sequential composition of basic activities). From now on, $\saga{P}$
stands for $\sagarun{P}{0}$. Also, $\prot{P}$ and $\killed{P}$ are
executing compensations. Notation $\protany{P}$ ranges over both of
them. Compensations should be executed in a protected way since we do
not want further abortions to stop them. Similar solutions are
exploited for instance in SOCK~\cite{ACSD}, WS-BPEL~\cite{WSBPEL} and
others. The difference between $\prot{P}$ and $\killed{P}$ is that
$\prot{P}$ has been triggered by the transition itself and commits if
$P$ commits, while $\killed{P}$ has been activated by an external
abort, thus it has to re-raise the abort if $P$ commits.

We need to extract the protected compensations from a process (to
actually protect them, cfr.\ rule \rn{a-par-d}), and we will use to
this end the function $\extr{\bullet}$ defined below. Similar
functions are used, e.g., in SOCK~\cite{ACSD} and dc$\pi$~\cite{VFR:DCP}.

\[
\begin{array}{rcl}
\extr{0} & = & 0\\
\extr{A \div B} & = & 0\\
\extr{P;Q} & = & \extr{P}\\
\extr{P|Q} & = & \extr{P} | \extr{Q}\\
\extr{\sagarun{P}{\beta}} & = & \extr{P};\beta\\
\extr{\protany{P}} & = & P
\end{array}
\]

Finally, we assume a predicate $\nullp{P}$, which holds if $P$ has no behavior:
\[
\begin{array}{rcl}
\nullp{0} & = & true\\
\nullp{P;Q} & = & \nullp{P} \land \nullp{Q}\\
\nullp{P|Q} & = & \nullp{P} \land \nullp{Q}\\
\nullp{\sagarun{P}{\beta}} & = & \nullp{P}\\
\nullp{\protany{P}} & = & \nullp{P}\\
\nullp{\protany{P}} & = & false \textrm{ otherwise}
\end{array}
\]

The dynamic semantics of SAGAs that we present extends the one in
\cite{SEFM09} to deal with nested sagas. The extension is non trivial:
for instance for the non-nested case neither the function
$\extr{\bullet}$ nor the auxiliary runtime syntax were needed.

\begin{definition}[Dynamic semantics of SAGAs]
The dynamic semantics $\arros{}$ of SAGAs is the LTS defined in
Table~\ref{table:dynamicsem} (we assume symmetric
rules for parallel composition). 
\end{definition}

\begin{table*}[p]
{\small
\[
\begin{array}{l@{\hspace{.1cm}}l}
  \mathaxiom{zero-d}
	    {\Gamma \vdash \langle 0, \beta \rangle \arros{0} \langle \commit, \beta \rangle} 
& \mathaxiom{s-act-d} 
	    {A\mapsto\commit,\Gamma\vdash
	      \langle A\div B, \beta \rangle \arros{A} \langle \commit, {B;\beta} \rangle} 
\\[5pt]
  \mathaxiom{f-act-d} 
	    {A\mapsto\abort,\Gamma\vdash
	      \langle A\div B, \beta \rangle \arros{0} \langle \abort, {\beta} \rangle} 
&
  \mathrule{step-d}
	   {\Gamma \vdash \langle P, \beta \rangle \arros{a} \langle P', \beta' \rangle} 
	   {\Gamma \vdash \langle {P;Q}, \beta \rangle \arros{a} \langle P';Q, \beta' \rangle} 
\\
  \mathrule{k-step-d}
	   {\Gamma \vdash \langle P, \beta \rangle \arros{\dagger} \langle P', \beta' \rangle} 
	   {\Gamma \vdash \langle {P;Q}, \beta \rangle \arros{\dagger} \langle P', \beta' \rangle} 
&
  \mathrule{s-step-d}
	   {\Gamma \vdash \langle P, \beta \rangle \arros{a} \langle \commit, \beta' \rangle} 
	   {\Gamma \vdash \langle {P;Q}, \beta \rangle \arros{a} \langle Q, \beta' \rangle} 
\\
  \mathrule{a-step-d} 
	   {\Gamma \vdash \langle P, \beta \rangle \arros{0} \langle \abort, \beta' \rangle} 
	   {\Gamma \vdash \langle {P;Q}, \beta \rangle \arros{0} \langle \abort, \beta' \rangle} 
&
  \mathrule{f-step-d} 
	   {\Gamma \vdash \langle P, \beta \rangle \arros{0} \langle \crash, 0 \rangle} 
	   {\Gamma \vdash \langle {P;Q}, \beta \rangle \arros{0} \langle \crash, 0 \rangle} 
\\
  \mathrule{par-d}
	   {\Gamma \vdash \langle P, \beta \rangle \arros{a} \langle P', \beta' \rangle} 
	   {\Gamma \vdash \langle {P|Q}, \beta \rangle \arros{a} \langle P'|Q, \beta' \rangle}
&
  \mathrule{k-par-d}
	   {\Gamma \vdash \langle P, \beta \rangle \arros{\dagger} \langle P', \beta' \rangle} 
	   {\Gamma \vdash \langle {P|Q}, \beta \rangle \arros{\dagger} \langle P'|\killed{\extr{Q}}, \beta' \rangle}
\\
  \mathrule{s-par-d}
	   {\Gamma \vdash \langle P, \beta \rangle \arros{a} \langle \commit, \beta' \rangle} 
	   {\Gamma \vdash \langle {P|Q}, \beta \rangle \arros{a} \langle Q , \beta' \rangle}	   
&
  \mathrule{a-par-d}
	   {\Gamma \vdash \langle P, \beta \rangle \arros{0} \langle \abort, \beta' \rangle \quad \extr{Q}=\beta'' \quad \neg \nullp{\beta''}} 
	   {\Gamma \vdash \langle {P|Q}, \beta \rangle \arros{\dagger} \langle \killed{\beta''}, \beta' \rangle}	   
\\
  \mathrule{a-par-fin-d}
	   {\Gamma \vdash \langle P, \beta \rangle \arros{0} \langle \abort, \beta' \rangle \quad \extr{Q}=\beta'' \quad \nullp{\beta''}} 
	   {\Gamma \vdash \langle {P|Q}, \beta \rangle \arros{0} \langle \abort, \beta' \rangle}	   
&
  \mathrule{f-par-d}
	   {\Gamma \vdash \langle P, \beta \rangle \arros{0} \langle \crash, 0 \rangle} 
	   {\Gamma \vdash \langle {P|Q}, \beta \rangle \arros{0} \langle \crash, 0 \rangle}	   
\\
  \mathrule{saga-d}
     { \Gamma \vdash \langle P, \beta \rangle \arros{a} \langle P', \beta' \rangle}
     { \Gamma \vdash \langle \sagarun{P}{\beta}, \beta'' \rangle \arros{a} \langle \sagarun{P'}{\beta'},\beta'' \rangle}	 
&
  \mathrule{k-saga-d}
     { \Gamma \vdash \langle P, \beta \rangle \arros{\dagger} \langle P', \beta' \rangle}
     { \Gamma \vdash \langle \sagarun{P}{\beta}, \beta'' \rangle \arros{0} \langle \sagarun{P'}{\beta'},\beta'' \rangle}	 
\\
  \mathrule{s-saga-d}
     { \Gamma \vdash \langle P, \beta \rangle \arros{a} \langle \commit, \beta' \rangle}
     { \Gamma \vdash \langle \sagarun{P}{\beta}, \beta'' \rangle \arros{a} \langle \commit, \beta';\beta'' \rangle}	 
&
  \mathrule{a-saga-d}
     { \Gamma \vdash \langle P, \beta \rangle \arros{0} \langle \abort, \beta' \rangle}
     { \Gamma \vdash \langle \sagarun{P}{\beta}, \beta'' \rangle \arros{0} \langle \prot{\beta'}, \beta'' \rangle}	 
\\
  \mathrule{f-saga-d}
     { \Gamma \vdash \langle P, \beta \rangle \arros{0} \langle \crash, 0 \rangle}
     { \Gamma \vdash \langle \sagarun{P}{\beta}, \beta'' \rangle \arros{0} \langle \crash, 0 \rangle}	 
&
  \mathrule{prot-d}
	   {\Gamma \vdash \langle P, \beta \rangle \arros{a} \langle P', \beta' \rangle} 
	   {\Gamma \vdash \langle \protany{P}, \beta \rangle \arros{a} \langle \protany{P'}, \beta' \rangle} 
\\
  \mathrule{k-prot-d}
	   {\Gamma \vdash \langle P, \beta \rangle \arros{\dagger} \langle P', \beta' \rangle} 
	   {\Gamma \vdash \langle \killed{P}, \beta \rangle \arros{\dagger} \langle \killed{P'}, \beta' \rangle} 
&
  \mathrule{s-prot-d}
	   {\Gamma \vdash \langle P, \beta \rangle \arros{a} \langle \commit, \beta' \rangle} 
	   {\Gamma \vdash \langle \prot{P}, \beta \rangle \arros{a} \langle \commit, \beta' \rangle} 
\\
  \mathrule{s-killed-d}
	   {\Gamma \vdash \langle P, \beta \rangle \arros{a} \langle \commit, \beta' \rangle} 
	   {\Gamma \vdash \langle \killed{P}, \beta \rangle \arros{a} \langle \abort, \beta' \rangle} 
&
  \mathrule{a-prot-d} 
	   {\Gamma \vdash \langle P, \beta \rangle \arros{0} \langle \abort, \beta' \rangle} 
	   {\Gamma \vdash \langle \protany{P}, \beta \rangle \arros{0} \langle \crash, 0 \rangle}
\end{array}
\]
}
\caption{Dynamic semantics of nested SAGAs.}
\protect\label{table:dynamicsem}
\end{table*}

Basic steps are as for the standard semantics.  Rules for composition
operators allow deriving both intermediate steps $\Gamma \vdash
\langle P, \beta \rangle \arros{a} \langle P', \beta' \rangle$ and
final steps $\Gamma \vdash \langle P, \beta \rangle \arros{a} \langle
\dontcare, \beta' \rangle$ (here $\dontcare$ ranges over
$\{\commit,\abort,\crash\}$ and $a$ is an activity name). Also $\dagger$ is
a possible label, denoting an abortion which is delayed to wait for
termination of running compensation activities. Rules \rn{step-d},
\rn{k-step-d}, \rn{s-step-d}, \rn{a-step-d} and \rn{f-step-d} deal
with the possible evolutions of sequential composition. Rules
\rn{par-d}, \rn{s-par-d} and \rn{f-par-d} concern normal computation,
commit and failure of one branch of parallel composition,
respectively. Rule \rn{a-par-d} deals with abortion of one branch. If
the other branch includes some running compensations, then abortion is
delayed and compensation execution is completed first. Running
compensations are extracted by function $\extr{P}$ and thus preserved,
while other parallel activities are discarded. Delayed abortion is
propagated using label $\dagger$. Label $\dagger$ is propagated by
rule \rn{k-par-d}, extracting running compensations from parallel
processes. When all the compensations have been completed (rule
\rn{a-par-fin-d}) abortion is raised again. Rules \rn{saga-d},
\rn{s-saga-d} and \rn{f-saga-d} deal with normal computation, commit
and failure of the internal computation in a saga, respectively. Rule
\rn{k-saga-d} stops the propagation of label $\dagger$. Rule
\rn{a-saga-d} deals with abortion of the internal computation of a
saga: the stored compensation is executed in a protected way.  The
behavior of protection (of the two kinds) is defined by rules
\rn{prot-d} for normal steps, rule \rn{k-prot-d} for delayed abortion
(actually, this can happen only for $\killed{P}$) and rule
\rn{a-prot-d} for abortion (producing a failure). The two kinds of
protection differ in case of commit of the internal computation:
$\prot{P}$ commits (rule \rn{s-prot-d}) while $\killed{P}$ aborts
(rule \rn{s-killed-d}), re-raising the delayed abortion.

We show a few computations as examples.

\begin{example}\label{ex:shipdynamic}
Let us consider the saga process defined in Example \ref{ex:shipstatic}.
Remember that $P=(\saga{loadA
\div unloadA} | loadB \div unloadB);leave$. Assume that all the
activities but $leave$ succeed.  We can derive, e.g., the following computation:
\begin{eqnarray*}
\Gamma \vdash \langle P,0 \rangle & \arros{loadA} & \langle (loadB \div unloadB);leave, unloadA \rangle\\
& \arros{loadB} & \langle leave, unloadB;unloadA \rangle\\
& \arros{0} & \langle \abort, unloadB;unloadA \rangle
\end{eqnarray*}
thus the process has aborted.
If we put the whole process inside a saga then compensation is actually executed and the saga succeeds:\\
\begin{eqnarray*}
\Gamma \vdash \langle \saga{P},0 \rangle & \arros{loadA}\arros{loadB} & \langle \sagarun{leave}{unloadB;unloadA},0 \rangle\\  
& \arros{0} & \langle \prot{unloadB;unloadA},0 \rangle\\
& \arros{unloadB} & \langle \prot{unloadA},0 \rangle\\
& \arros{unloadA} & \langle \commit,0 \rangle
\end{eqnarray*}

We consider here another possible computation, so to clarify one of
the most tricky cases of the dynamic semantics. Assume that
instead of activity $loadA$ with compensation $unloadA$ we have a
sequential composition of two activities, $loadA1$ with compensation
$unloadA1$ and $loadA2$ with compensation $unloadA2$.
Assume also that activity
$loadB$ aborts, while the other activities
succeed. 
\begin{eqnarray*}
\Gamma \vdash \langle P'',0 \rangle & \arros{loadA1} & \langle (\sagarun{loadA2}{unloadA1} | loadB \div unloadB);leave,0 \rangle\\
& \arros{\dagger} & \langle \killed{unloadA1}, 0 \rangle\\
& \arros{unloadA1} & \langle \abort, 0 \rangle
\end{eqnarray*}
Here $loadB$ aborts when the parallel saga is still running, thus
abortion is postponed. After compensation has been performed, the
abortion is raised again.
\end{example} 

\begin{example}\label{ex:subtransdynamic}
We show now the behavior of the saga in Example~\ref{ex:subtransstatic} when
executed under the dynamic semantics.
\begin{eqnarray*}
\Gamma \vdash \langle P',0 \rangle\! & \arros{loadA1} & \langle \sagarun{loadA2 \div unloadA2}{unloadA1} | loadB1 \div unloadB1;loadB2 \div unloadB2, 0 \rangle
\end{eqnarray*}
Here $unloadA1$ is not enabled, and it becomes enabled only after $loadB1$ is observed and $loadB2$ is executed, triggering the abort:
\begin{eqnarray*}
\Gamma \vdash \langle P',0 \rangle\!\!\!\!\! & \arros{loadA1} & \langle \sagarun{loadA2 \div unloadA2}{unloadA1} | loadB1 \div unloadB1;loadB2 \div unloadB2, 0 \rangle\\
& \arros{loadB1} & \langle \sagarun{loadA2 \div unloadA2}{unloadA1} | loadB2 \div unloadB2, unloadB1 \rangle\\
& \arros{\dagger} & \langle \killed{unloadA1}, unloadB1 \rangle\\
& \arros{unloadA1} & \langle \abort, unloadB1 \rangle
\end{eqnarray*}

\end{example}

\section{Static vs Dynamic SAGAs}\label{sec:comp}
In this section we compare the static and dynamic semantics of nested
SAGAs. In particular, we show that each computation obtained from the
dynamic semantics is compatible with a big-step of the static
semantics. We show also that the static semantics allows for more
nondeterminism in the order of execution of activities, i.e. it allows
for some computations not valid according to the dynamic semantics.

Labels of a big-step correspond to sets of computations of small-steps
(assuming an interleaving interpretation for parallel composition).
We write $\Arro{\alpha} $ with $\alpha=a_1;\dots;a_n$ to denote
$\arros{a_1}\cdots\arros{a_n}$. We remove from $\alpha$ both $0$ and
$\dagger$. However, big-step labels may also include parallel
composition operators, thus to perform the comparison we have to
introduce the concept of linearization.  We consider the set of
linearizations $\lin(\alpha)$ of a big-step label $\alpha$, which is
defined by structural induction on $\alpha$:
\begin{eqnarray*}
\lin(A) & = & \{A\}\\ 
\lin(\alpha;\alpha') & = & \{\gamma;\gamma' |\ \gamma \in \lin(\alpha) \land \gamma' \in \lin(\alpha')\}\\
\lin(\alpha|\alpha') & = & \bigcup_{\gamma \in \lin(\alpha) \land \gamma' \in \lin(\alpha')}\gamma \interleave \gamma'
\end{eqnarray*}
where $\interleave$ is defined as follows:
\begin{eqnarray*}
0 \interleave \gamma & = & \{\gamma\}\\
\gamma \interleave 0 & = & \{\gamma\}\\
A;\gamma \interleave A';\gamma' & = & \{A;\gamma''|\ \gamma'' \in \gamma \interleave A';\gamma'\} \cup \{A';\gamma''|\ \gamma'' \in A;\gamma \interleave \gamma'\}
\end{eqnarray*}
In words, $\interleave$ computes the set of all possible interleavings of the sequences of actions in its two arguments.

Summarizing, a big-step label $\alpha$ corresponds to the set of
small-step computations with labels in $\lin(\alpha)$.

Next lemma discusses the properties of $\dagger$-labeled transitions.


\begin{lemma}\label{lemma:dagger}
If $\Gamma \vdash \langle P,\beta \rangle \Arro{\dagger} \langle P',
\beta' \rangle$ then $P'$ is a parallel composition of terms of the
form $\killed{P_i}$.
\end{lemma}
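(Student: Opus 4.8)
The plan is to argue by rule induction on the derivation of the final small-step, the one carrying the label $\dagger$. Reading $\langle P, \beta\rangle \Arro{\dagger} \langle P', \beta'\rangle$ as a computation whose last transition is $\dagger$-labelled, the shape of the target $P'$ is determined entirely by the derivation of that last step, so it suffices to prove the single-step statement: whenever $\langle R, \beta\rangle \arros{\dagger} \langle P', \beta'\rangle$, the process $P'$ is a parallel composition of terms $\killed{P_i}$ (a single such term counting as a degenerate parallel composition).

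The first thing I would do is isolate the rules of Table~\ref{table:dynamicsem} that can place $\dagger$ in the conclusion. Using that the metavariable $a$ ranges over activity names only, so that $a \neq \dagger$, the only candidates are \rn{a-par-d}, \rn{k-step-d}, \rn{k-par-d} and \rn{k-prot-d}, together with the symmetric instance of \rn{k-par-d}. Rule \rn{a-par-d} is the unique source of $\dagger$ and gives the base case: its conclusion is $\langle P|Q,\beta\rangle \arros{\dagger} \langle \killed{\beta''}, \beta'\rangle$ with $\beta'' = \extr{Q}$, and $\killed{\beta''}$ already has the required form. The remaining three rules merely propagate a $\dagger$ premise, so they furnish the inductive cases.

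For the inductive step I would apply the induction hypothesis to the unique $\dagger$-labelled premise. Under \rn{k-step-d} the target is copied unchanged from the premise, so the hypothesis applies verbatim; under \rn{k-par-d} the target is $P'|\killed{\extr{Q}}$, which appends a single $\killed{\cdot}$ factor to a parallel composition of such factors and hence preserves the shape; under \rn{k-prot-d} the target is $\killed{P'}$, again a single term of the required form regardless of the internal structure of $P'$. The computations are routine, and the only genuinely load-bearing points — the place where a careless proof would break — are the two exclusions in the previous paragraph: that $a$ is an activity name distinct from $\dagger$ (otherwise \rn{step-d}, \rn{par-d} or \rn{saga-d} could carry $\dagger$ and yield sequential or saga contexts violating the claim), and that \rn{k-saga-d} demotes a $\dagger$ premise to the label $0$, so $\dagger$ never escapes a $\sagarun{\cdot}{\cdot}$ boundary and no surrounding saga context ever has to be considered in the induction. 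The symmetric form of \rn{k-par-d} is handled by the same argument.
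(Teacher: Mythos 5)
The paper states Lemma~\ref{lemma:dagger} without giving any proof, so there is nothing to compare against; judged on its own, your argument is correct and is the natural one. The reduction to the last small step is sound (the target of a computation is the target of its final transition), and your inventory of the rules that can carry $\dagger$ in the conclusion is exactly right: \rn{a-par-d} as the unique source, \rn{k-step-d}, \rn{k-par-d} and \rn{k-prot-d} as propagators, with \rn{k-saga-d} correctly excluded because it demotes $\dagger$ to $0$, and \rn{step-d}, \rn{par-d}, \rn{saga-d}, \rn{prot-d} excluded because $a$ ranges over activity names. The inductive cases all preserve the claimed shape. Two very minor points: you should also list the symmetric instance of \rn{a-par-d} alongside that of \rn{k-par-d} (it yields $\killed{\extr{P}}$ and is handled identically), and, since the paper's convention is to erase $\dagger$ from big-step labels, it would be worth stating explicitly that you read $\Arro{\dagger}$ as a computation whose final transition is the $\dagger$-labelled one — this is the reading consistent with the lemma's use in the proof of Theorem~\ref{th:compatible}, but it is an interpretive choice the paper leaves implicit.
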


We can now prove our main theorems, relating the behavior of static
and dynamic semantics for SAGAs. We have two theorems, one for each
direction.

\begin{theorem}\label{th:compatible}
If $\Gamma \vdash \langle P,\beta \rangle \Arro{\gamma} \langle
\dontcare, \beta' \rangle$ with $\dontcare \in
\{\commit,\abort,\crash\}$ then there is a big-step $\Gamma \vdash
\langle P,\beta \rangle \arrob{\alpha} \langle \dontcare, \beta''
\rangle$ with $\gamma \in \lin(\alpha)$ and $\beta' \in
\lin(\beta'')$.
\end{theorem}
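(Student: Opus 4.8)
The plan is to prove the statement by induction on the structure of the source process $P$, so that the static big-step always makes sense (the static LTS is defined only on source terms, while the dynamic computation passes through runtime terms $\sagarun{\cdot}{\cdot}$, $\prot{\cdot}$, $\killed{\cdot}$). The base cases are immediate: $P=\nil$ is matched by \rn{zero} against \rn{zero-d}, and $P=A\div B$ by \rn{s-act}/\rn{f-act} against \rn{s-act-d}/\rn{f-act-d}, with $\gamma=\alpha$ a single activity (or $0$) and $\beta'=\beta''$. The inductive cases work by \emph{factorizing} (and, for parallel, \emph{projecting}) the dynamic computation into sub-computations of the source subterms, so that runtime terms never escape a factor that is itself a computation of a source subterm; the induction hypothesis then applies.

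For sequential composition $P=P_1;P_2$ I would first observe, from the shape of \rn{step-d}, \rn{k-step-d}, \rn{s-step-d}, \rn{a-step-d}, \rn{f-step-d}, that the computation factors uniquely: a prefix acting inside $P_1$ (the term stays of the form $P_1';P_2$) until $P_1$ reaches an outcome, followed --- only if $P_1$ commits --- by a computation of $P_2$. Applying the induction hypothesis to each factor gives static big-steps for $P_1$ and $P_2$, recombined with \rn{s-step} (when $P_1$ commits) or \rn{a-step} (when it aborts or crashes); the conditions $\gamma\in\lin(\alpha)$ and $\beta'\in\lin(\beta'')$ survive because $\lin$ distributes over ``$;$'' by definition. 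The nested saga $P=\saga{P_0}$ is analogous: the computation factors into the internal run of the body $P_0$ and, on abort of the body, the protected execution (via \rn{a-saga-d}, \rn{prot-d}, \rn{s-prot-d}, \rn{a-prot-d}) of the stored compensation $\beta'$; matching the resulting outcome against \rn{sub-cmt}, \rn{sub-abt}, \rn{sub-fail-1} and \rn{sub-fail-2} and invoking the induction hypothesis on $P_0$ and on $\beta'$ (a source term, being a sequence of basic activities).

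The parallel case $P=P_1|P_2$ is the real obstacle. Since the dynamic steps of the two branches interleave, I first need a projection argument separating the small-steps acting on the left branch from those acting on the right, proving that each projection is a legal dynamic computation of $P_1$, resp.\ $P_2$, from $\langle P_i,0\rangle$. A delicate point already here is that the dynamic semantics threads one shared compensation stack through both branches, whereas \rn{s-par} runs each branch from the empty stack and combines the results as $(\beta_1|\beta_2);\beta$; I must therefore attribute each pushed compensation to its branch and show the final shared stack is exactly an interleaving of $\beta_1$ and $\beta_2$ prepended to $\beta$, hence lies in $\lin((\beta_1|\beta_2);\beta)$. When one branch, say $P_1$, aborts while the other still carries running compensations, rule \rn{a-par-d} delays the abort, and by Lemma~\ref{lemma:dagger} the term reduces to a parallel composition of $\killed{\cdot}$ blocks obtained through $\extr{\bullet}$; these run to completion (via \rn{k-par-d}, \rn{k-prot-d}, \rn{s-killed-d}, \rn{a-prot-d}) before the abort is re-raised by \rn{a-par-fin-d}.

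The crux is that the induction hypothesis does not suffice for the \emph{interrupted} branch: this theorem only yields static outcomes in $\{\commit,\abort,\crash\}$, whereas \rn{s-par} requires the interrupted branch to carry a \emph{forced} outcome $\fabort$ (or $\fabcrash$/$\fcrash$ if one of its compensations fails). I would therefore prove, as a companion lemma by structural induction, that whenever the dynamic semantics interrupts a branch $Q$ --- executing exactly the compensations $\extr{Q}$ in killed mode --- there is a matching static forced big-step $\langle Q,0\rangle\arrob{\alpha_2}\langle\dontcare_2,\beta_2\rangle$ with $\dontcare_2\in\{\fabort,\fcrash,\fabcrash\}$, built from \rn{forced-abt}, \rn{forced-fail}, \rn{sub-forced-1} and \rn{sub-forced-2}, whose compensation label and stored compensation agree, up to $\lin$, with the dynamic run. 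With this in hand I combine the abort big-step for $P_1$ (induction hypothesis) and the forced big-step for $P_2$ (companion lemma) through \rn{s-par} or \rn{f-par}, checking that the $\land$ table reproduces the dynamic outcome ($\abort\land\fabort=\abort$, $\abort\land\fabcrash=\crash$, and the crash cases for \rn{f-par}) and that $\gamma\in\lin(\alpha_1|\alpha_2)$ --- which holds because the killed compensations run only after all forward activities, one admissible interleaving. Establishing this companion forced-abort correspondence, together with reconciling the single threaded stack against the structured $(\beta_1|\beta_2);\beta$, is the hardest and most error-prone part of the argument.
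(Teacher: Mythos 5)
Your proposal is correct and follows essentially the same route as the paper: structural induction on $P$, with the forward cases handled by factorizing/projecting the dynamic computation into sub-computations of the source subterms, and the interrupted-branch cases handled by a forced-outcome correspondence built from \rn{forced-abt}, \rn{forced-fail}, \rn{sub-forced-1} and \rn{sub-forced-2}. The only notable difference is presentational: the paper folds your ``companion lemma'' into the main induction as a strengthened hypothesis (three extra clauses producing $\fabort$, $\fabcrash$ and $\fcrash$ from computations ending in a runtime process $P'$ followed by a run of $\extr{P'}$), and this packaging is actually the safer one, since the forced cases and the main statement invoke each other on subterms (e.g.\ the saga case of the forced correspondence needs a commit/abort big-step for the stored compensation), so the two must be established by simultaneous induction rather than by a genuinely separate lemma.
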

\begin{proof}
The proof is by structural induction on $P$. Actually for the
induction we need a stronger hypothesis, requiring also that: 
\begin{itemize}
\item if $\Gamma \vdash \langle P,\beta \rangle \Arro{\gamma} \langle
  P', \beta' \rangle$ and $\Gamma \vdash \langle \extr{P'},0 \rangle
  \Arro{\gamma'} \langle \commit,0 \rangle$, then there is a big-step
  $\Gamma \vdash \langle P,\beta \rangle \arrob{\alpha;\alpha'}
  \langle \fabort, \beta'' \rangle$ with $\gamma \in \lin(\alpha)$,
  $\gamma' \in \lin(\alpha')$ and $\beta' \in \lin(\beta'')$;
\item if $\Gamma \vdash \langle P,\beta \rangle \Arro{\gamma} \langle
  P', \beta' \rangle$ and $\Gamma \vdash \langle \extr{P'},0 \rangle
  \Arro{\gamma'} \langle \dontcare,0 \rangle$ with $\dontcare \in \{\abort,\crash\}$, then there is a big-step
  $\Gamma \vdash \langle P,\beta \rangle \arrob{\alpha;\alpha'}
  \langle \fabcrash, 0 \rangle$ with $\gamma \in \lin(\alpha)$ and
  $\gamma' \in \lin(\alpha')$;
\item if $\Gamma \vdash \langle P,\beta \rangle \Arro{\gamma} \langle
  P', \beta' \rangle$, then there is a big-step $\Gamma \vdash \langle
  P,\beta \rangle \arrob{\alpha} \langle \fcrash, 0 \rangle$ with
  $\gamma \in \lin(\alpha)$.
\end{itemize}
We have the following cases:
\begin{description}
\item[$P=0$:] the only non trivial computation is $\Gamma \vdash
  \langle 0,\beta \rangle \arros{0} \langle \commit, \beta
  \rangle$. The big-step $\Gamma \vdash \langle 0,\beta \rangle
  \arrob{0} \langle \commit, \beta \rangle$ derived from rule
  \rn{zero} satisfies the thesis.  As far as the empty computation is
  concerned the two big-steps $\Gamma \vdash \langle 0,\beta \rangle
  \arrob{0} \langle \fabort, \beta \rangle$ and $\Gamma \vdash \langle
  0,\beta \rangle \arrob{0} \langle \fcrash, 0 \rangle$, derived from
  rules \rn{forced-abt} and \rn{forced-fail} respectively, satisfy
  the thesis.
\item[$P=A \div B$:] we have a case analysis according to
  $\Gamma(A)$. If $\Gamma(A)=\commit$ then the only non trivial
  computation is $\Gamma \vdash \langle A \div B,\beta \rangle
  \arros{A} \langle \commit, B;\beta \rangle$, and we have a
  corresponding big-step, derived from rule \rn{s-act}. Similarly for
  the case $\Gamma(A)=\abort$, using rule \rn{f-act}.  The empty
  computations can be matched as before.
\item[$P=P_1;P_2$:] assume that there is a computation $\Gamma \vdash \langle
  P_1;P_2,\beta \rangle \Arro{\gamma} \langle \dontcare, \beta'
  \rangle$ with $\dontcare \in \{\commit,\abort,\crash\}$. We have to
  consider the three cases $\dontcare=\commit$, $\dontcare=\abort$ and
  $\dontcare=\crash$.
\begin{description}
\item[$\dontcare=\commit$:] let us consider the first part of the
  computation. The only possibility is to have the first zero or more
  steps derived using as last rule \rn{step-d} followed by one step
  derived using as last rule \rn{s-step-d}. By concatenating the
  premises we have a computation $\Gamma \vdash \langle P_1,\beta
  \rangle \Arro{\gamma'} \langle \commit, \beta'' \rangle$. By
  inductive hypothesis we have a big-step $\Gamma \vdash \langle
  P_1,\beta \rangle \arrob{\alpha'} \langle \commit, \delta'' \rangle$
  with $\gamma' \in \lin(\alpha')$ and $\beta'' \in
  \lin(\delta'')$. Also, using the last part of the computation we have
  a big-step $\Gamma \vdash \langle P_2,\beta'' \rangle
  \arrob{\alpha''} \langle \commit, \delta' \rangle$ with $\gamma' \in
  \lin(\alpha'')$, $\beta' \in \lin(\delta')$ and
  $\gamma=\gamma';\gamma''$. The thesis follows by rule \rn{s-step}.
  It is not possible to have the first zero or more steps derived
  using as last rule \rn{step-d} followed by one step derived using as
  last rule \rn{k-step-d} since this computation can not succeed (see
  Lemma~\ref{lemma:dagger}).
\item[$\dontcare=\abort$:] we have a few possibilities here, according
  to which is the first rule applied different from \rn{step-d}. If it
  is rule \rn{s-step-d} then the thesis follows by inductive
  hypothesis applying rule \rn{s-step}. If it is rule \rn{a-step-d}
  then the thesis follows by inductive hypothesis applying rule
  \rn{a-step}. If it is rule \rn{k-step-d} then the thesis follows by
  inductive hypothesis, again applying rule \rn{a-step}.
\item[$\dontcare=\crash$:] similar to the one above.
\end{description}
  As far as computations leading to processes are concerned, a similar
  reasoning can be done. The thesis follows from rule \rn{a-step}
  if the computation is only composed by steps from rule
  \rn{step-d}.  If the computation includes also a step from rule
  \rn{s-step-d} then the thesis follows from rule \rn{s-step}. If the
  computation includes also a step from rule \rn{k-step-d} then the
  thesis follows from rule \rn{a-step}. 
\item[$P=P_1|P_2$:] assume that there is a computation $\Gamma \vdash \langle
  P_1|P_2,\beta \rangle \Arro{\gamma} \langle \dontcare, \beta'
  \rangle$ with $\dontcare \in \{\commit,\abort,\crash\}$. We have to
  consider the three cases $\dontcare=\commit$, $\dontcare=\abort$ and
  $\dontcare=\crash$.
\begin{description}
\item[$\dontcare=\commit$:] let us consider the first part of the
  computation. The only possibility is to have the first zero or more
  steps derived using as last rule \rn{par-d} followed by one step
  derived using as last rule \rn{s-par-d}. Assume for simplicity that
  \rn{s-par-d} eliminates the first component of the parallel
  composition. By concatenating the premises of those transitions that
  concern the first component we have a computation $\Gamma \vdash
  \langle P_1,0 \rangle \Arro{\gamma'} \langle \commit, \beta''
  \rangle$. By inductive hypothesis we have a big-step $\Gamma \vdash
  \langle P_1,0 \rangle \arrob{\alpha'} \langle \commit, \delta''
  \rangle$ with $\gamma' \in \lin(\alpha')$ and $\beta'' \in
  \lin(\delta'')$. Also, using the premises of the transitions
  involving the second component and the last part of the computation
  we have a big-step $\Gamma \vdash \langle P_2,\beta \rangle
  \arrob{\alpha''} \langle \commit, \delta''' \rangle$ with $\gamma''
  \in \lin(\alpha'')$ and $\beta''' \in \lin(\delta''')$. Also, $\gamma$ is an
  interleaving of $\gamma'$ and $\gamma''$ and $\beta'$ is obtained
  by prefixing $\beta$ with an interleaving of $\beta''$ and
  $\beta'''$. The thesis follows by rule \rn{s-par} (since $\commit \land
  \commit = \commit$).  It is not possible to have the first zero or
  more steps derived using as last rule \rn{step-d} followed by one
  step derived using as last rule \rn{k-step-d} since this computation
  can not succeed (see Lemma~\ref{lemma:dagger}).
\item[$\dontcare=\abort$:] we have a few possibilities here, according
  to which is the first rule applied different from \rn{par-d}. If it
  is \rn{s-par-d} then the thesis follows by inductive hypothesis
  applying rule \rn{s-par} (since $\commit \land \abort = \abort$). If it
  is \rn{a-par-d} then from the premises concerning $P$ we can derive
  an abortion for $P$. From $Q$ we can derive a computation leading to
  $Q'$ and an abortion for $\killed{\extr{Q'}}$, i.e. a commit for
  $\extr{Q'}$. Thus we can derive a big-step leading to $\fabort$ for
  $Q$. The thesis follows from rule \rn{s-par} (since $\abort \land
  \fabort= \abort$). The case of rule \rn{k-par-d} is similar, with
  the abortion coming after all compensations have been consumed
  (i.e., when rule \rn{a-par-fin-d} is triggered).
\item[$\dontcare=\crash$:] we have a few possibilities here, according
  to which is the first rule applied different from \rn{par-d}. If it
  is \rn{s-par-d} then the thesis follows by inductive hypothesis
  applying rule \rn{f-par} (since $\commit \land \crash = \crash$). If it
  is \rn{a-par-d} then from the premises concerning $P$ we can derive
  an abortion for $P$. From $Q$ we can derive a computation leading to
  $Q'$ and a failure for $\killed{\extr{Q'}}$, i.e. an abort or a
  failure for $\extr{Q'}$. Thus we can derive a big-step leading to
  $\fabcrash$ for $Q$. The thesis follows from rule \rn{f-par} (since
  $\abort \land \fabcrash= \crash$). The case of rule \rn{k-par-d} is
  similar. The case of rule \rn{f-par-d} follows from rule \rn{f-par},
  since $\crash \land \fcrash = \crash$.
\end{description}
  As far as computations leading to processes are concerned, a similar
  reasoning can be done. One has to distinguish interrupt from abort
  and interrupt from failure. In the first case the thesis follows by
  inductive hypothesis applying rule \rn{s-par} (with $\fabort \land
  \fabort= \fabort$) if both the compensations succeed. If at least
  one of the compensations fails then the other one is interrupted by a
  failure and the thesis follows by inductive hypothesis applying rule
  \rn{f-par} (with $\fabcrash \land \fcrash= \fabcrash$). If interruption is from a failure, then
  the thesis follows from rule \rn{f-par} (with $\fcrash \land \fcrash= \fcrash$).
\item[$P=\saga{P_1}$:] assume that there is a computation $\Gamma \vdash \langle
  \saga{P_1},\beta \rangle \Arro{\gamma} \langle \dontcare, \beta'
  \rangle$ with $\dontcare \in \{\commit,\abort,\crash\}$. We have to
  consider the three cases $\dontcare=\commit$, $\dontcare=\abort$ and
  $\dontcare=\crash$.
\begin{description}
\item[$\dontcare=\commit$:] we have two possibilities here. The first
  one is to have the first zero or more steps derived using as last
  rule \rn{saga-d} followed by one step derived using as last rule
  \rn{s-saga-d}. By concatenating the premises of those transitions we
  have a computation $\Gamma \vdash \langle P_1,0 \rangle
  \Arro{\gamma} \langle \commit, \beta' \rangle$. By inductive
  hypothesis we have a big-step $\Gamma \vdash \langle P_1,0 \rangle
  \arrob{\alpha} \langle \commit, \delta' \rangle$ with $\gamma \in
  \lin(\alpha)$ and $\beta' \in \lin(\delta')$. The thesis follows by
  rule \rn{sub-cmt}. The second case exploits rule \rn{saga-d} at the
  beginning (or possibly \rn{k-saga-d}), then one transition from rule
  \rn{a-saga-d}, then some transitions from rule \rn{prot-d} and
  finally one transition from rule \rn{s-prot-d}. Here by considering
  the premises of the first part of the computation we have a
  computation $\Gamma \vdash \langle P_1,0 \rangle \Arro{\gamma'}
  \langle \abort, \beta'' \rangle$. From the second part we get a
  computation $\Gamma \vdash \langle \beta'',0 \rangle \Arro{\gamma''}
  \langle \commit, 0 \rangle$. The thesis follows from the inductive
  hypothesis by applying rule \rn{sub-abt}.
\item[$\dontcare=\abort$:] there is no possibility for a saga
  computation to lead to an abort, thus this case can never happen.
\item[$\dontcare=\crash$:] we have two possibilities here. The first
  one is to have the first zero or more steps derived using as last
  rule \rn{saga-d} (and possibly some \rn{k-saga-d}) followed by one
  step derived using as last rule \rn{f-saga-d}. By concatenating the
  premises of those transitions we have a computation $\Gamma \vdash
  \langle P_1,0 \rangle \Arro{\gamma} \langle \crash, 0 \rangle$. By
  inductive hypothesis we have a big-step $\Gamma \vdash \langle P_1,0
  \rangle \arrob{\alpha} \langle \crash, 0 \rangle$ with $\gamma \in
  \lin(\alpha)$. The thesis follows by rule \rn{sub-fail-1}. The
  second case exploits rule \rn{saga-d} at the beginning, then one
  transition from rule \rn{a-saga-d}, then some transitions from rule
  \rn{prot-d} and finally one transition from rule \rn{a-prot-d}. Here
  by considering the premises of the first part of the computation we
  have a computation $\Gamma \vdash \langle P_1,0 \rangle
  \Arro{\gamma'} \langle \abort, \beta' \rangle$. From the second part
  we get a computation $\Gamma \vdash \langle \beta',0 \rangle
  \Arro{\gamma''} \langle \abort, \beta'' \rangle$. The thesis
  follows from the inductive hypothesis by applying rule
  \rn{sub-fail-2}.
\end{description}
  As far as computations leading to processes are concerned, a similar
  reasoning can be done. If the computation is only composed by
  applications of rules \rn{saga-d} or \rn{k-saga-d} then the thesis
  follows by inductive hypothesis from rule \rn{sub-fail-1} or from
  rule \rn{sub-forced-2}.  If there is an application of rule
  \rn{a-saga-d} then the thesis follows from rule \rn{sub-forced-1}.
\end{description}
\end{proof}

The following theorem considers the other direction.

\begin{theorem}\label{th:allows}
If $\Gamma \vdash \langle P,\beta \rangle \arrob{\alpha} \langle
\dontcare, \beta'' \rangle$ with $\dontcare \in
\{\commit,\abort,\crash\}$ then there are $\gamma \in \lin(\alpha)$ and
$\beta' \in \lin(\beta'')$ such that $\Gamma \vdash \langle P,\beta \rangle \Arro{\gamma} \langle
\dontcare, \beta' \rangle$ 
\end{theorem}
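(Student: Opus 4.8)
The plan is to prove Theorem~\ref{th:allows} by structural induction on $P$, the mirror image of the proof of Theorem~\ref{th:compatible}. The direction here is from a static big-step to a dynamic small-step computation: given a derivation of $\Gamma \vdash \langle P,\beta \rangle \arrob{\alpha} \langle \dontcare, \beta'' \rangle$, I must exhibit a specific linearization $\gamma \in \lin(\alpha)$ together with a compatible stored compensation $\beta' \in \lin(\beta'')$ and a dynamic computation realizing them. As in the companion theorem, a direct induction on $P$ with only the three outcomes $\commit,\abort,\crash$ will not close, because the behavior of a saga body and of parallel branches under external interruption produces the auxiliary outcomes $\fabort,\fcrash,\fabcrash$. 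So first I would \emph{strengthen the induction hypothesis} to also cover big-steps ending in $\fabort$, $\fcrash$ and $\fabcrash$, stating in each case which dynamic computation (ending in a residual process whose extracted compensation behaves appropriately, cf.\ the use of $\extr{\bullet}$ and the $\killed{\cdot}$/$\prot{\cdot}$ constructs) is produced. Concretely, a $\fabort$ big-step should correspond to a dynamic computation reaching some $P'$ with $\extr{P'}$ able to commit, a $\fabcrash$ to one where $\extr{P'}$ aborts or fails, and a $\fcrash$ to one that fails outright; these mirror the three extra bullets used in Theorem~\ref{th:compatible}.

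The base cases $P=\nil$ and $P=A\div B$ are immediate: each static axiom (\rn{zero}, \rn{s-act}, \rn{f-act}, \rn{forced-abt}, \rn{forced-fail}) is matched by the corresponding dynamic axiom or by a trivial one-step/empty computation, with $\alpha$ containing no parallel operator so $\lin(\alpha)=\{\alpha\}$. For $P=P_1;P_2$ I would invert the last static rule. If it is \rn{s-step}, the inductive hypotheses for $P_1$ (committing) and $P_2$ give two dynamic computations which I concatenate via rules \rn{step-d} and \rn{s-step-d}, taking $\gamma$ to be the concatenation of the chosen linearizations. If it is \rn{a-step} I only need the computation for $P_1$, lifted through \rn{step-d} and then the appropriate aborting/failing/delayed-abort sequential rule. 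The saga case $P=\saga{P_1}$ is handled by inverting \rn{sub-cmt}, \rn{sub-abt}, \rn{sub-fail-1}, \rn{sub-fail-2}, \rn{sub-forced-1}, \rn{sub-forced-2}: a committing body lifts through \rn{saga-d}/\rn{s-saga-d}; an aborting body followed by a compensation lifts through \rn{a-saga-d} into a $\prot{\beta'}$ whose computation (obtained by applying the IH to the compensation $\beta'$ viewed as a process) then commits via \rn{prot-d}/\rn{s-prot-d} or fails via \rn{a-prot-d}, matching $\commit$ or $\crash$; external interruption of a saga is realized by the $\dagger$/$\killed{\cdot}$ machinery.

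The main obstacle, as the authors themselves flag in Example~\ref{ex:subtransstatic}, is the parallel case $P=P_1|P_2$, and specifically the fact that the static semantics admits \emph{more} interleavings than the dynamic one: a static big-step may record a compensation action of one branch as occurring before the abort of the other branch, whereas the dynamic semantics forces such a compensation to wait (it is extracted into a $\killed{\cdot}$ via \rn{a-par-d}/\rn{k-par-d} and only released by \rn{a-par-fin-d}/\rn{s-killed-d}). Since the theorem asserts only the \emph{existence} of some $\gamma\in\lin(\alpha)$, the strategy is to \emph{choose the linearization that respects the dynamic scheduling}: run the forward activities of both branches (interleaved freely via \rn{par-d}/\rn{s-par-d}), then, at the point one branch aborts, schedule the extracted compensation of the sibling \emph{after} the abort is detected, using \rn{a-par-d} to enter the delayed-abort regime and \rn{k-par-d} to drain the remaining protected compensations, finishing with \rn{a-par-fin-d}. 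The outcome is then read off the $\land$ table exactly as in Theorem~\ref{th:compatible} ($\commit\land\commit=\commit$, $\abort\land\fabort=\abort$, $\abort\land\fabcrash=\crash$, $\crash\land\fcrash=\crash$, and the $\fabort/\fcrash/\fabcrash$ entries for the strengthened cases). Verifying that such a dynamic-respecting $\gamma$ always lies in $\lin(\alpha)$ — i.e.\ that postponing the sibling compensations is always a legal interleaving of the static label — is the crux, and it is precisely here that the asymmetry between the two semantics (dynamic $\subsetneq$ static) is witnessed; Lemma~\ref{lemma:dagger} is invoked to guarantee that the $\dagger$-phase leaves exactly a parallel composition of $\killed{P_i}$ terms, so the drain via \rn{k-par-d} and \rn{a-par-fin-d} is well-defined.
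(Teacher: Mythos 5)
Your proposal follows essentially the same route as the paper: the paper's own proof of Theorem~\ref{th:allows} is a one-line remark that the argument is a case analysis mirroring that of Theorem~\ref{th:compatible} (all details omitted), and your sketch is a faithful, considerably more detailed elaboration of exactly that plan --- structural induction on $P$ with the induction hypothesis strengthened to cover $\fabort$, $\fcrash$ and $\fabcrash$, rule inversion in each case, and a dynamically-scheduled choice of linearization in the parallel case. Nothing in your sketch conflicts with the machinery of Tables~\ref{table:staticsem} and~\ref{table:dynamicsem}, and you correctly isolate the only delicate point, namely that postponing the sibling's compensations until after the abort is always a legal interleaving of the static label.
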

\begin{proof}
The proof requires a case analysis similar to the one of
Theorem~\ref{th:compatible}. We omit the details.
\end{proof}

\begin{example}
Consider the ship example and its two executions at the beginning
of Example~\ref{ex:shipstatic} (with static semantics) and
Example~\ref{ex:shipdynamic} (with dynamic semantics). It is easy to
see for instance that $loadA;loadB \in \lin(loadA|loadB)$ and
$unloadB;unloadA \in \lin(unloadA|unloadB)$. Thus the dynamic
computation is compatible with the static big-step, for a suitable
choice of the interleaving of parallel activities.
\end{example}

Notably, it is not possible to prove Theorem~\ref{th:allows} by
requiring a computation \emph{for each} possible linearization of
$\alpha$. In fact the chosen linearization depends on the runtime
execution of activities. For instance, in Example~\ref{ex:shipdynamic}
it is not possible to have as observation $loadA;loadB$ and as
compensation $unloadA;unloadB$. Also, abortion of nested sagas is
managed in a different way, as shown in the example below.

\begin{example}
Consider the saga in Example~\ref{ex:subtransstatic}. According to the
static semantics it has a big-step with label
$(loadA1;unloadA1)|loadB1$.  In particular, $loadA1;unloadA1;loadB1$
is a possible linearization of $(loadA1;unloadA1)|loadB1$. However, as shown in
Example~\ref{ex:subtransdynamic}, there is no dynamic computation with
this behavior. There is however a dynamic computation compatible with
the big-step, as shown in Example~\ref{ex:subtransdynamic},
considering the linearization $loadA1;loadB1;unloadA1$.
\end{example}

These are the main differences between
static and dynamic SAGAs: among all the computations compatible with
the static semantics, only some are acceptable for the dynamic
semantics, and whether a computation is acceptable or not depends on
the relation between the order of execution of activities and of their
compensations and on the interplay between nesting and parallel composition.

This also explains why we used a small-step semantics for dynamic
SAGAs (while classic semantics for static SAGAs is big-step): it is
difficult to catch the dependencies between the order of execution of
activities and the order of execution of compensations using big-step
semantics. For instance, a rule such as \rn{s-par} in
Table~\ref{table:staticsem} tailored for dynamic SAGAs should require
that the interleaving chosen for activities in $\alpha_1|\alpha_2$ is the
same as the one chosen for their compensations in $\beta_1|\beta_2$,
and one would need to track the correspondence between activities and
their compensations.

Summarizing, the theorems above provide the following insights:
\begin{itemize}
\item the static and the dynamic semantics are strongly related, in
  the sense that for each static big-step there is (at least) one
  dynamic computation compatible with it, and for each dynamic
  computation there is a compatible big-step;
\item the two semantics are not equivalent, since not all the
  computations compatible with a big-step are valid dynamic
  computations;
\item in the dynamic semantics the order of execution of compensations
  depends on the order of execution of basic activities, while this is
  not the case for the static semantics;
\item in the dynamic semantics compensations of subtransactions can be
  observed only after the abort itself has been observed.
\end{itemize}

\section{Conclusion}
We have presented two semantics for nested SAGAs, a static semantics
following the centralized interruption policy and a dynamic
semantics. The two semantics are a non trivial step forward w.r.t.\ the
non-nested case presented in the literature
(cfr. \cite{BBF+:SAGASCONCUR} and \cite{SEFM09}). Even the static
semantics is quite different from the (static) semantics of nested
SAGAs with distributed interruption presented in
\cite{BMM:SAGASPOPL}. The main difference relies in the different ways
compensations have to be managed in the two cases. Actually, we think
that the semantics with distributed interruption is realistic only in
asynchronous settings since, as said in \cite{BBF+:SAGASCONCUR}, it
``includes a ``guessing mechanism" that allows branches on the forward
flow to compensate even before an activity aborts''. This forbids for
instance an encoding into a lower-level framework such as the one in
\cite{SEFM09}. A similar behavior occurs also in our static semantics,
but only in the case of nested transactions.

As far as future work is concerned, many different proposals of
primitives and models for long-running transactions have been put
forward in the last years, yet a coherent picture of the field is
still far. Understanding the relationships between the different
formalisms is fundamental so to understand which of them are best
suited to be introduced in real languages. Even restricting our
attention to SAGAs, the interplay between dynamicity and the different
approaches presented in \cite{BBF+:SAGASCONCUR} has to be fully
analyzed. Also, we are currently working~\cite{BKLS:COORDINATED} on a
static semantics for SAGAs which is distributed but does not require
the guessing mechanism described above.

\section{Bibliography}

\bibliographystyle{eptcs} 
\bibliography{biblio}

\end{document}